\documentclass[11pt,a4paper]{article}

\usepackage{amsmath,amssymb,amsthm}
\usepackage{mathrsfs}

\usepackage[utf8]{inputenc}
\usepackage[T1]{fontenc}
\usepackage{lmodern}
\usepackage[a4paper,margin=1in]{geometry}
\usepackage{microtype}
\usepackage{parskip}

\usepackage{xcolor}

\usepackage[colorlinks=true,
            linkcolor=blue!50!black,
            citecolor=blue!50!black,
            urlcolor=blue!50!black,
            breaklinks=true]{hyperref}

\newtheorem{theorem}{Theorem}[section]
\newtheorem{definition}[theorem]{Definition}
\newtheorem{problem}[theorem]{Problem}

\theoremstyle{remark}
\newtheorem{remark}[theorem]{Remark}
\newtheorem{example}[theorem]{Example}

\newcommand{\C}{\mathbb{C}}
\newcommand{\R}{\mathbb{R}}
\newcommand{\N}{\mathbb{N}}
\newcommand{\Rnn}{\mathbb{R}_{\geq 0}}
\newcommand{\Rpp}{\mathbb{R}_{>0}}
\newcommand{\Mor}{\operatorname{Mor}}
\newcommand{\ob}{\operatorname{ob}}
\newcommand{\dom}{\operatorname{dom}}
\newcommand{\cod}{\operatorname{cod}}
\newcommand{\Cps}{C_{\mathrm{Ps}}}
\newcommand{\Ccoh}{\mathcal{C}_{\mathrm{coh}}}

\numberwithin{equation}{section}

\begin{document}

\title{Local Uniqueness of the Born Rule on Categories\\with Complex-Weighted Morphisms}

\author{Tayfun Ustun\\
Ankara, Turkey\\
ORCID: \href{https://orcid.org/0009-0000-9203-6323}{0009-0000-9203-6323}}

\date{\textit{Zenodo DOI (this paper):
\href{https://doi.org/10.5281/zenodo.22069154}{10.5281/zenodo.22069154}
--- companion to Zenodo DOI:
\href{https://doi.org/10.5281/zenodo.22069114}{10.5281/zenodo.22069114}}}

\maketitle

\begin{abstract}
I prove a local uniqueness theorem for the Born rule in the setting of quiver-generated
categories equipped with complex morphism weights and path-amplitude
probability functionals. Given (i) non-negativity, (ii) polynomiality of
bounded total degree, (iii) global $U(1)$ invariance, (iv) classical-limit
additivity over mutually exclusive paths, and (v) normalization, I show that
the probability assignment $P : \C \to \Rnn$ is uniquely determined to be
$P(z) = |z|^2$. The notion of mutually exclusive paths is given a precise
categorical formulation as the absence of a shared factorization through any
common morphism. I relate the result to reconstructions of quantum
probability due to Gleason, Hardy, Chiribella--D'Ariano--Perinotti, and
several further, more recent reconstructions, and
identify the extension to global coherence under morphism composition as an
open problem connected to synthetic probability theory in Markov categories.
The Born rule emerges as the unique locally consistent probability law on
complex amplitudes, fixed by phase invariance and classical-limit
behavior together with polynomiality and normalization, independent of any Hilbert-space framework.

\medskip
\noindent\textbf{Keywords:} Born rule derivation; categorical quantum
mechanics; complex-weighted categories; path amplitudes; Gleason's theorem;
$U(1)$ invariance; classical limit; mutually exclusive paths; Markov
categories; Chapman--Kolmogorov; synthetic probability theory; quantum
foundations.
\end{abstract}

\section{Introduction}\label{sec:intro}

Reconstructions of the Born rule from structural or informational axioms have
a substantial history. Gleason's theorem~\cite{Gleason1957} derives the
$|\psi|^2$ rule from $\sigma$-additivity of probability measures on
projections in Hilbert spaces of dimension at least three. Later
reconstructions obtain quantum probability from informational principles
without assuming a Hilbert space a priori: Hardy's five-axiom
derivation~\cite{Hardy2001} singles out quantum theory by a continuity
requirement on pure states between discrete dimensions, and the
reconstruction of Chiribella, D'Ariano, and Perinotti~\cite{Chiribella2011}
obtains quantum probability from six operational principles including
causality, purification, and local distinguishability. In parallel, the categorical
quantum mechanics program of Abramsky and
Coecke~\cite{AbramskyCoecke2004,AbramskyCoecke2008} reformulates quantum
processes in symmetric monoidal language, treating states, transformations,
and measurements as morphisms in dagger-compact categories. The
free-category construction used below draws on the general
categorical-foundations program of Lawvere's functorial
semantics~\cite{Lawvere1963} and Baez--Dolan higher-dimensional
algebra~\cite{BaezDolan1995}, though the present setting requires none of
the higher-categorical or algebraic-theory machinery of either.

In this paper, I address a sub-problem that arises in frameworks in which
quantum amplitudes are defined as products of complex weights along
categorical paths --- the path-amplitude formulation that generalizes
Feynman's sum over histories~\cite{Feynman1948} to arbitrary small categories
equipped with a morphism weight functional $w : \Mor \to \C$. In such
settings, the probability assignment $P(A) = |A|^2$ is commonly taken as an
additional postulate, with justification deferred to an informal
``Gleason-like'' argument. I make the argument precise in its local form:
given a single amplitude $A(A \to B)$ obtained by summing products of complex
weights over all paths from an object $A$ to an object $B$, the Born rule is
uniquely determined by five explicitly stated axioms.

The local result leaves open a global question. A complex-weighted quiver-generated
category gives rise, for each ordered pair of objects, to a hom-set of paths
and a corresponding total amplitude; the theorem below fixes the probability
functional on each hom-set in isolation. Whether the family of local
probability functionals is coherent with composition of morphisms in the
underlying category --- in the manner of a Chapman--Kolmogorov identity ---
is not settled by a local argument and requires a functorial probabilistic
structure. I state this global coherence problem precisely in
Section~\ref{sec:global} and indicate its connection to the synthetic
probability theory of Markov
categories~\cite{ChoJacobs2019,Fritz2020,Perrone2024}.

The paper is organized as follows. Section~\ref{sec:prelim} fixes notation
and defines the key notion of mutually exclusive paths in a quiver-generated category.
Section~\ref{sec:main} states and proves the main theorem.
Section~\ref{sec:relations} compares the result with Gleason's theorem and
with informational reconstructions. Section~\ref{sec:global} states the
global coherence problem. Section~\ref{sec:discussion} concludes with
remarks on applications to categorical-refinement frameworks for quantum
mechanics.

\section{Preliminaries}\label{sec:prelim}

\begin{definition}[Complex-weighted quiver-generated category]
Let $Q$ be a small quiver (a directed multigraph) with vertex set $\ob(Q)$
and edge set $E(Q)$. Let $C := \mathrm{Free}(Q)$ be the free category
generated by $Q$: $\ob(C) = \ob(Q)$, and for objects $A$, $B$, the
non-identity elements of $\Mor_C(A,B)$ are in bijection with nonempty
composable words $e_n \circ \cdots \circ e_1$ of edges of $Q$ running from
$A$ to $B$. A \emph{complex weight} on $Q$ is a function $w_0 : E(Q) \to
\C$, extended to $\Mor(C)$ by
\begin{enumerate}
\item[(a)] $w(\mathrm{id}_A) := 1$ for every object $A \in \ob(C)$;
\item[(b)] $w(e_n \circ \cdots \circ e_1) := \prod_{i=1}^n w_0(e_i)$ for
every non-identity morphism.
\end{enumerate}
Because $C$ is free, this extension is automatically well-defined and
automatically satisfies $w(f \circ g) = w(f) \cdot w(g)$ for all composable
$f$, $g$ --- freeness means no relation ever identifies two distinct words,
so $w_0$ can be assigned arbitrarily on edges with no consistency
constraint to check. I call $(C, w)$ a \emph{complex-weighted
quiver-generated category}.
\end{definition}

\emph{Freeness is not a mere convenience.} It guarantees that every
non-identity morphism of $C$ has a \emph{unique} factorization into
generating edges, so no morphism can simultaneously be ``elementary'' (a
single generating edge) and ``equal to'' a distinct multi-edge composite
that would otherwise be separately countable in $\Gamma(A,B)$ below.

Condition (b) makes $w$ a functor $w : C \to B\C$, where $B\C$ is the
one-object category with morphism monoid $(\C, \cdot)$. For $w_0(e) \neq
0$, the polar decomposition of the punctured plane, $\C^\times \cong \Rpp
\times U(1)$, gives $w_0(e) = |w_0(e)| \cdot \exp(i\,\theta(e))$ with
$\theta(e)$ well-defined $\bmod\, 2\pi$. If $w_0(e) = 0$ for some edge $e$
in a word, the corresponding composite has weight $0$; I adopt the
convention that a zero-weight morphism carries \emph{no defined phase},
since such a morphism contributes $0$ to every amplitude sum in which it
appears and this convention therefore has no effect on any calculation in
this paper. For composable morphisms built entirely from nonzero-weight
edges, $|w(f \circ g)| = |w(f)| \cdot |w(g)|$ and $\theta(f \circ g) =
\theta(f) + \theta(g) \pmod{2\pi}$.

\begin{definition}[Paths and amplitudes]\label{def:paths-amplitudes}
A \emph{path} $\gamma : A \to B$ is a nonempty word $(f_n, \ldots, f_1)$ of
composable generating edges of $Q$ with $\dom(f_1) = A$, $\cod(f_n) = B$.
By freeness of $C$, this is in exact bijection with the non-identity
elements of $\Mor_C(A,B)$; I use ``path'' and ``non-identity morphism''
interchangeably without ambiguity. The set of all paths from $A$ to $B$ is
denoted $\Gamma(A, B)$. Because $C$ is free, \emph{distinct words never
compose to the same morphism}, so $\Gamma(A,B)$ contains no duplicate
entries and no morphism is counted more than once. The \emph{amplitude} of
$\gamma$ is
\begin{equation}
a(\gamma) := \prod_{i=1}^{n} w_0(f_i).
\label{eq:path-amplitude}
\end{equation}
By the previous definition, $a(\gamma)$ coincides with $w(\gamma)$ when
$\gamma$ is regarded as a single composite morphism. The \emph{total
amplitude} from $A$ to $B$ is
\begin{equation}
A(A \to B) := \sum_{\gamma \in \Gamma(A, B)} a(\gamma).
\label{eq:total-amplitude}
\end{equation}

\textbf{Convergence hypothesis (scoped).} I require the following
countable-support / absolute-convergence condition on a hom-set
$\Gamma(A,B)$ \emph{only when the total amplitude $A(A \to B)$ of
Eq.~\eqref{eq:total-amplitude} is actually invoked as a summed quantity
elsewhere in this paper} --- for instance, wherever $A(A \to B)$ is
computed numerically or compared against another summed quantity, as in
Example~\ref{ex:cop-fail} below. On any such hom-set: the set $\{\gamma \in
\Gamma(A,B) : a(\gamma) \neq 0\}$ must be \emph{countable} and $\sum
|a(\gamma)| < \infty$ over that countable set (necessary, not merely
convenient: an absolutely convergent series has at most countably many
nonzero terms, so absolute convergence over an uncountable $\Gamma(A,B)$
with all-nonzero terms is not achievable). I call a complex-weighted
quiver-generated category satisfying this on a given hom-set
\emph{summable at $(A,B)$}. This condition is \emph{not} imposed on
hom-sets used solely to witness the existence of individual paths with
prescribed properties, as in Definition~\ref{def:amp-rich} below, where no
total sum over $\Gamma(A,B)$ is ever formed or relied upon.
\end{definition}

\begin{definition}[Mutual exclusivity]\label{def:mutual-excl}
Two paths $\gamma_1, \gamma_2 \in \Gamma(A, B)$ are \emph{mutually
exclusive} if there do \emph{not} exist a non-identity morphism $h \in
\Mor(C)$ and composable sub-paths $\gamma_1', \gamma_1'', \gamma_2',
\gamma_2''$ --- each possibly the \emph{empty} (zero-length) word --- such
that
\begin{equation}
\gamma_i = \gamma_i'' \circ h \circ \gamma_i'
\quad \text{for both } i = 1 \text{ and } i = 2.
\label{eq:mutual-exclusivity}
\end{equation}
Equivalently, $\gamma_1$ and $\gamma_2$ are mutually exclusive iff there is
no non-identity morphism $h$ that occurs as a factor of both $\gamma_1$
and $\gamma_2$ \emph{at any position, including as the first or last
edge}. (Allowing empty flanking sub-paths is essential: without it, two
histories sharing only their first --- or only their last --- transition
would be misclassified as mutually exclusive, since the flanking sub-path
on that side would have no room to be positive-length.) A pair of paths
fails to be mutually exclusive when such a shared factor exists.
\end{definition}

\begin{remark}[Relation to Hilbert-space orthogonality]
When $C$ is a subcategory of the category $\mathbf{Hilb}$ of Hilbert spaces
with morphisms linear operators, the amplitude $a(\gamma)$ reproduces the
Feynman history amplitude and mutual exclusivity in the previous definition
is strictly stronger than the decomposition of histories into
pairwise-orthogonal subspaces used in the standard Hilbert-space Born rule
argument. A precision is needed here, since the two-slit example below
could otherwise be misread as a counterinstance: ``orthogonal'' refers to
the class-operator (projector-valued) decomposition at the
history's distinguishing intermediate event --- e.g., in the two-slit
case, the orthogonal projectors onto ``passed through slit 1'' versus
``passed through slit 2'' at the slit plane --- not to the final state
the histories may converge to. Two mutually exclusive paths that later
interfere at a shared final object, as in Example~\ref{ex:two-slit}
below, are precisely the case where these two notions of
``orthogonal'' come apart: orthogonal (hence exclusive, in the
standard sense) at the distinguishing event, non-orthogonal (indeed,
overlapping) at the final state --- which is exactly why they
interfere rather than simply not co-occurring. The converse of the
strictly-stronger claim above still fails under the distinguishing-event
reading: there exist orthogonal histories in
$\mathbf{Hilb}$ that factor through a common intermediate state and are
thus non-exclusive in the present sense. I regard the previous definition
as the natural categorical strengthening of orthogonality for the purposes
of the axiomatization below.\footnote{The topos approach of Isham and
D\"oring~\cite{IshamDoring2008} handles orthogonality through the internal
logic of a presheaf topos built over a context category of commutative
subalgebras of a $C^*$-algebra, where orthogonality is an internal-logic
relation. The mutual-exclusivity definition is structurally independent: it
requires no ambient Hilbert space, no $C^*$-algebra, and no topos. Both
frameworks generalise Hilbert-space orthogonality, but from different
categorical directions --- the topos approach generalises the projection
lattice, while the present definition generalises the no-shared-history
condition from Feynman's sum over histories.}
\end{remark}

\addtocounter{theorem}{-1}
\renewcommand{\thetheorem}{\thesection.\arabic{theorem}$'$}
\makeatletter
\renewcommand{\theHtheorem}{\thesection.\arabic{theorem}.prime}
\makeatother
\begin{example}[Two-slit]\label{ex:two-slit}
Consider the canonical two-slit setup represented in $\Cps$ --- the
possibility-space category of the motivating categorical-refinement
framework (companion applied paper; \S\ref{sec:discussion} below) --- as a category
with four objects: source $S$, screen $B$, and two distinct slit-objects
$S_1$ and $S_2$. The path $\gamma_1 = (S \to S_1 \to B)$ and path
$\gamma_2 = (S \to S_2 \to B)$ pass through different slit-morphisms and
share no morphism at all. The source object $S$ and target $B$ are
shared, but these are objects, not morphisms. No non-identity morphism $h$
occurs as a factor of both $\gamma_1$ and $\gamma_2$ at any position, so
the two paths are mutually exclusive in the sense above. This is the categorical
formulation of the statement that passing through slit~1 and passing
through slit~2 are exclusive histories.
\end{example}
\renewcommand{\thetheorem}{\thesection.\arabic{theorem}}
\makeatletter
\renewcommand{\theHtheorem}{\thesection.\arabic{theorem}}
\makeatother

\begin{remark}[Mutual exclusivity is a local, not compositional, notion]
Definition~\ref{def:mutual-excl} is not required to be preserved under
post-composition, and in general it is not: if $\gamma_1, \gamma_2 : A \to
B$ are mutually exclusive and $\delta : B \to D$ is any further path, then
$\delta \circ \gamma_1$ and $\delta \circ \gamma_2$ share the factor
$\delta$ and are therefore \emph{not} mutually exclusive. This is by
design, not an oversight: Definition~\ref{def:mutual-excl} identifies
pairs of histories that may be fed into axiom~(iv) at a single hom-set, and
Theorem~\ref{thm:born} is stated and proved entirely at that local level.
Whether exclusivity should behave compositionally under morphism
composition is exactly the kind of question the global coherence problem
of Section~\ref{sec:global} is designed to isolate, not something the
local definition needs to answer.
\end{remark}

\begin{definition}[Probability functional]
A \emph{probability functional on amplitudes} is a function $P : \C \to
\Rnn$.
\end{definition}

In Section~\ref{sec:main}, I will assume $P$ is a polynomial in $z$ and
$\bar{z}$ of bounded total degree; no a priori regularity on $\C \to \Rnn$
beyond this polynomial assumption is required.

\begin{definition}[Amplitude richness]\label{def:amp-rich}
A complex-weighted quiver-generated category $(C, w)$ is
\emph{amplitude-rich at} $(A,B)$ if for every
pair $(r_1, r_2) \in \Rpp \times \Rpp$ there exist mutually exclusive paths
$\gamma_1, \gamma_2 \in \Gamma(A,B)$ with $|a(\gamma_1)| = r_1$,
$|a(\gamma_2)| = r_2$. This condition refers only to individual path
amplitudes $a(\gamma)$ (Eq.~\eqref{eq:path-amplitude}), which are always
well-defined finite products regardless of the size of $\Gamma(A,B)$;
amplitude richness therefore does \emph{not} require $(C,w)$ to be
summable at $(A,B)$ in the sense of the previous definition, since no
total sum over $\Gamma(A,B)$ is invoked.
\end{definition}

\begin{remark}[Existence of an amplitude-rich category]\label{rem:amp-rich-exists}
Amplitude richness is constructively realizable, not merely an assumed
hypothesis. Let $Q$ have two objects $A, B$ and, for each $r \in \Rpp$, two
distinct parallel generating edges $e_r, e_r' : A \to B$ with $w_0(e_r) =
w_0(e_r') = r$. Then $\Gamma(A,B)$ is uncountable, so $(C, w) :=
(\mathrm{Free}(Q), w)$ is \emph{not} summable at $(A,B)$ --- but
summability is not required here, since no total sum $A(A \to B)$ is ever
invoked for this category anywhere in this paper. For any $(r_1, r_2) \in
\Rpp \times \Rpp$, the single-edge paths $\gamma_1 = e_{r_1}$ and $\gamma_2
= e_{r_2}'$ come from distinct generators, so they share no non-identity
morphism at all and are automatically mutually exclusive by
Definition~\ref{def:mutual-excl}; they satisfy $|a(\gamma_1)| = r_1$,
$|a(\gamma_2)| = r_2$, including when $r_1 = r_2$, since $e_{r_1}$ and
$e_{r_1}'$ remain distinct generators. Hence $(C,w)$ is amplitude-rich at
$(A,B)$, and Theorem~\ref{thm:born}'s hypothesis is non-vacuously
satisfiable.
\end{remark}

\section{Main Theorem}\label{sec:main}

\begin{theorem}[Local Born rule uniqueness]\label{thm:born}
Let $P : \C \to \Rnn$ be a probability functional, and suppose there exists a
complex-weighted quiver-generated category $(C, w)$, amplitude-rich at some
$(A,B)$ in the sense of Definition~\ref{def:amp-rich} (such a category
exists constructively; see Remark~\ref{rem:amp-rich-exists}). Assume:
\begin{enumerate}
\item[(i)] \textbf{Non-negativity.} $P(z) \geq 0$ for every $z \in \C$.
(Definitional; see Remark~\ref{rem:non-neg}.)
\item[(ii)] \textbf{Polynomiality.} There exist $d \in \N$ and coefficients
$\{c_{k\ell}\} \subset \C$ with $c_{k\ell} = 0$ whenever $k + \ell > d$, such
that $P(z) = \sum_{k, \ell \geq 0} c_{k\ell}\, z^k \bar{z}^\ell$ for all
$z \in \C$.
\item[(iii)] \textbf{Global phase invariance.} $P(e^{i\alpha} z) = P(z)$
for every $\alpha \in \R$ and every $z \in \C$.
\item[(iv)] \textbf{Classical-limit additivity.} For every $r_1, r_2 > 0$
--- realized, by amplitude-richness (Definition~\ref{def:amp-rich}), as
$|A_1|, |A_2|$ for some amplitudes $A_1, A_2 \in \C$ of mutually exclusive
paths in the category of the theorem's hypothesis --- and for $\theta$ an
independent auxiliary variable ranging uniformly over $[0, 2\pi)$,
\begin{equation}
\frac{1}{2\pi}\int_0^{2\pi} P(A_1 + e^{i\theta} A_2)\, d\theta
= P(A_1) + P(A_2).
\label{eq:additivity}
\end{equation}
I abbreviate the averaging integral $\frac{1}{2\pi}\int_0^{2\pi}(\cdot)\,
d\theta$ as $\mathbb{E}_\theta[\,\cdot\,]$ throughout the proof below.
\item[(v)] \textbf{Normalization.} $P(1) = 1$.
\end{enumerate}
Then $P(z) = |z|^2$ for every $z \in \C$.
\end{theorem}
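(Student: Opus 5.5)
The plan is to use phase invariance to collapse $P$ to a polynomial in $|z|^2$, then turn axiom (iv) into a polynomial identity in two positive real variables and compare coefficients.

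\textbf{Step 1: reduce to a polynomial in $|z|^2$.} By (ii) and (iii), for every $z$ we have $P(z)=\mathbb{E}_\alpha[P(e^{i\alpha}z)]$, where $\alpha$ is averaged uniformly over $[0,2\pi)$. Substituting the expansion of (ii), the term $c_{k\ell}z^k\bar z^\ell$ picks up a factor $e^{i(k-\ell)\alpha}$. Its average vanishes unless $k=\ell$. Hence
\[
P(z)=\sum_{m=0}^{\lfloor d/2\rfloor} c_{mm}\,|z|^{2m}=:p(|z|^2).
\]
Since $P$ is real-valued, $p$ is real on $[0,\infty)$, so its coefficients $b_m:=c_{mm}$ are real: a polynomial taking real values on a half-line has real coefficients.

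\textbf{Step 2: translate (iv) into a polynomial identity.} Fix $r_1,r_2>0$ and take mutually exclusive paths with $|A_1|=r_1$ and $|A_2|=r_2$, which exist by amplitude-richness (Definition~\ref{def:amp-rich}). Write $A_1=r_1e^{i\phi_1}$ and $A_2=r_2e^{i\phi_2}$. By Step 1, $P(A_1+e^{i\theta}A_2)=p\bigl(|r_1+e^{i(\theta+\phi_2-\phi_1)}r_2|^2\bigr)$. Translation invariance of the uniform average in $\theta$ removes the phase offset, so the left side of \eqref{eq:additivity} depends only on $r_1,r_2$.

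For each monomial I would compute $\mathbb{E}_\theta|r_1+e^{i\theta}r_2|^{2m}$ by expanding $(r_1+e^{i\theta}r_2)^m(r_1+e^{-i\theta}r_2)^m$ and keeping only the $\theta$-independent terms. This gives
\[
\mathbb{E}_\theta|r_1+e^{i\theta}r_2|^{2m}=\sum_{j=0}^m\binom{m}{j}^2 s^j t^{m-j},\qquad s=r_1^2,\ t=r_2^2 .
\]
So (iv) becomes the identity
\[
\sum_m b_m\sum_{j=0}^m\binom{m}{j}^2 s^jt^{m-j}=\sum_m b_m(s^m+t^m)\quad\text{for all } s,t>0 .
\]

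\textbf{Step 3: compare coefficients.} Both sides are polynomials in $(s,t)$ that agree on the open quadrant, so they are equal as polynomials and I may compare coefficients monomial by monomial.
\begin{itemize}
\item The constant term gives $b_0=2b_0$, hence $b_0=0$.
\item For each $m\ge 2$, the monomial $s\,t^{m-1}$ occurs on the left with coefficient $m^2b_m$ and does not occur on the right. Distinct $m$ produce distinct monomials, so there is no cancellation, and $b_m=0$.
\item The $m=1$ term is consistent automatically, since both sides give $b_1(s+t)$.
\end{itemize}
Thus $P(z)=b_1|z|^2$, and (v) forces $b_1=1$. Non-negativity (i) is not actually used beyond $P$ being real-valued.

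\textbf{Expected obstacle.} The main point requiring care is the legitimacy of the reduction in Steps 1--2. First, the representation in (ii) need not be unique, but averaging shows that $P$ as a function equals $p(|z|^2)$, and that is all the later steps use. Second, one must check that (iv) really yields the identity for \emph{all} $s,t>0$, and not only along some restricted set. This is exactly what amplitude-richness supplies: it realizes every pair of moduli $(r_1,r_2)\in\Rpp\times\Rpp$. The remainder of the argument is routine coefficient extraction.
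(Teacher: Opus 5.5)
Your proposal is correct and follows the paper's proof essentially step for step: phase invariance kills the off-diagonal $c_{k\ell}$, axiom~(iv) becomes a polynomial identity in $(r_1^2,r_2^2)$ that holds on the whole open quadrant by amplitude richness, the constant term forces $b_0=0$, the coefficient $m^2$ of the mixed monomial $st^{m-1}$ forces $b_m=0$ for $m\ge 2$, and normalization fixes $b_1=1$. The differences are only in execution. In Step~1 you average over $\alpha$ where the paper matches coefficients, so you never need uniqueness of the representation in (ii); that representation is in fact unique, since the $z^k\bar z^\ell$ are linearly independent functions, contrary to your aside. Your constant-term formula $\sum_j\binom{m}{j}^2 s^j t^{m-j}$ is the same polynomial as the paper's $\sum_j\binom{m}{2j}\binom{2j}{j}(s+t)^{m-2j}(st)^j$, but it gives the coefficient $m^2$ directly rather than as $m+m(m-1)$.
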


\begin{proof}
I proceed in four steps.

\textbf{Step 1 ($U(1)$ invariance reduces $P$ to a polynomial in $|z|^2$).}
Writing $P$ as in axiom~(ii):
\[
P(z) = \sum_{k, \ell} c_{k\ell}\, z^k \bar{z}^\ell,
\]
axiom~(iii) applied to $z \mapsto e^{i\alpha} z$ yields
\[
P(e^{i\alpha} z) = \sum_{k, \ell} c_{k\ell}\, e^{i(k - \ell)\alpha}
z^k \bar{z}^\ell = P(z) \quad \text{for every } \alpha \in \R.
\]
Matching coefficients of $z^k \bar{z}^\ell$ across this identity forces
$c_{k\ell} = 0$ whenever $k \neq \ell$. Writing $c_k := c_{kk}$, I obtain
\begin{equation}
P(z) = \sum_{k = 0}^{K} c_k\, |z|^{2k},
\qquad K := \lfloor d/2 \rfloor.
\label{eq:reduced-P}
\end{equation}

\textbf{Step 2 (Classical-limit averages).}
Fix $A_1, A_2 \in \C$ with $r_i := |A_i|$, and let $\theta$ range
independently and uniformly over $[0, 2\pi)$ as in axiom~(iv). Writing
$A_1 = |A_1| e^{i\varphi_1}$, axiom~(iii) ($U(1)$ invariance of $P$)
together with shift-invariance of $\int_0^{2\pi}(\cdot)\,d\theta$ over a
full period gives $\mathbb{E}_\theta[P(A_1 + e^{i\theta}A_2)] =
\mathbb{E}_\theta[P(|A_1| + e^{i\theta}A_2)]$: rotating the whole
expression by $e^{-i\varphi_1}$ leaves $P$ unchanged by axiom~(iii), and
reindexing $\theta \mapsto \theta + \varphi_1$ leaves the integral
unchanged since it runs over a full period. This eliminates $A_1$'s phase
but leaves $A_2 = |A_2| e^{i\varphi_2}$ with its own phase intact inside
the expression $|A_1| + e^{i\theta}A_2 = |A_1| + e^{i(\theta +
\varphi_2)}|A_2|$. A second, identical application of shift-invariance ---
reindexing $\theta \mapsto \theta - \varphi_2$, which again leaves the
integral unchanged since it still runs over a full period --- removes
$\varphi_2$ as well, giving $\mathbb{E}_\theta[P(|A_1| +
e^{i\theta}A_2)] = \mathbb{E}_\theta[P(|A_1| + e^{i\theta}|A_2|)]$. So I
may take $A_1, A_2 \in \R_+$ directly, with no loss of generality and no
need to track either initial phase. Then
\[
|A_1 + e^{i\theta} A_2|^2
= r_1^2 + r_2^2 + 2 r_1 r_2 \cos\theta.
\]
The $m$-th power expands via the binomial theorem:
\[
|A_1 + e^{i\theta} A_2|^{2m}
= \sum_{j = 0}^{m} \binom{m}{j} (r_1^2 + r_2^2)^{m-j}
(2 r_1 r_2)^{j} \cos^{j}\theta.
\]
Using the standard Fourier moment $\mathbb{E}_\theta[\cos^{j}\theta] = 0$ for
$j$ odd and $\mathbb{E}_\theta[\cos^{2j}\theta] = \binom{2j}{j} / 2^{2j}$ for
$j \geq 0$ (writing the even case of the binomial index above as $2j$),
taking the expectation gives a surviving constant factor
$(2 r_1 r_2)^{2j} \cdot \binom{2j}{j} / 2^{2j} = \binom{2j}{j} (r_1 r_2)^{2j}$
from each even-$j$ term in the binomial expansion, so that
\begin{equation}
\mathbb{E}_\theta\bigl[\,|A_1 + e^{i\theta} A_2|^{2m}\,\bigr]
= \sum_{j = 0}^{\lfloor m/2 \rfloor} \binom{m}{2j} \binom{2j}{j}
(r_1^2 + r_2^2)^{m - 2j} (r_1 r_2)^{2j}.
\label{eq:moment-expansion}
\end{equation}

\textbf{Step 3 (Additivity eliminates $c_k$ for $k \neq 1$).}
By amplitude richness (Definition~\ref{def:amp-rich}), for every $r_1, r_2
> 0$ there exist realized mutually exclusive path amplitudes with exactly
those magnitudes, so axiom~(iv) applies for every such pair; the identity
below therefore holds for all $x = r_1^2 > 0$, $y = r_2^2 > 0$, not merely
for the specific complex amplitudes any one witnessing construction
happens to realize --- this is what licenses treating $x, y$ as free
variables in the polynomial identity that follows. Write $x := r_1^2$,
$y := r_2^2$, and $E_m(x,y) :=
\mathbb{E}_\theta[\,|A_1 + e^{i\theta}A_2|^{2m}\,]$ as
in~\eqref{eq:moment-expansion}. Each summand $(x+y)^{m-2j}(xy)^j$
in~\eqref{eq:moment-expansion} is homogeneous of total degree $(m-2j) +
2j = m$ in $(x,y)$; hence $E_m$ is \textbf{homogeneous of total degree
$m$}. Substituting~\eqref{eq:reduced-P} into axiom~(iv) gives the
polynomial identity
\begin{equation}
\sum_{k=0}^{K} c_k\, E_k(x,y) = \sum_{k=0}^{K} c_k\,(x^k + y^k)
\qquad \text{for all } x, y > 0,
\label{eq:step3-identity}
\end{equation}
where $K = \lfloor d/2 \rfloor$ is finite by axiom~(ii). Both sides
of~\eqref{eq:step3-identity} are finite sums, each term of which is
homogeneous of a distinct degree $k$ ($0 \le k \le K$). A polynomial in two
variables has a \emph{unique} decomposition into homogeneous components by
degree, so an identity between two such graded sums forces equality
\textbf{degree by degree}:
\begin{equation}
c_k\, E_k(x,y) = c_k\,(x^k + y^k) \qquad \text{for every } k = 0, \ldots, K,
\label{eq:step3-degree}
\end{equation}
independently --- no descent or induction across degrees is needed, since
distinct $k$ can never contribute to the same homogeneous component.
\begin{itemize}
\item \textbf{$k = 0$:} $E_0 = 1$ (only $j=0$ in the sum, giving
$\binom{0}{0}\binom{0}{0}(x+y)^0(xy)^0 = 1$). So~\eqref{eq:step3-degree}
reads $c_0 = 2c_0$, hence $c_0 = 0$.
\item \textbf{$k = 1$:} $E_1 = x + y$ (only $j=0$, since $\lfloor 1/2
\rfloor = 0$), so~\eqref{eq:step3-degree} is satisfied identically for any
$c_1$ --- this coefficient is unconstrained by axiom~(iv), as before.
\item \textbf{$k \geq 2$:} The monomial $x^{k-1}y$ in $E_k$ receives
contributions from \emph{two} values of $j$, not one: the $j=0$ term is
$(x+y)^k$ on its own, whose $x^{k-1}y$ coefficient is $\binom{k}{1} = k$;
the $j=1$ term is $\binom{k}{2}\binom{2}{1}(x+y)^{k-2}(xy) =
k(k-1)(x+y)^{k-2}xy$, whose $x^{k-1}y$ coefficient (from the $x^{k-2}$
piece of the expansion) is $k(k-1)$. (No $j \geq 2$ term can contribute,
since $(xy)^j$ already carries $y$-degree $j \geq 2$, exceeding the target
$y$-degree of $1$.) Summing both contributions, the coefficient of
$x^{k-1}y$ in $E_k$ is $k + k(k-1) = k^2$.\footnote{Verified independently
by direct polynomial expansion: $E_2 = (x+y)^2 + 2xy = x^2 + 4xy + y^2$
(coefficient $4 = 2^2$); $E_3 = (x+y)^3 + 6xy(x+y) = x^3 + 9x^2y + 9xy^2 +
y^3$ (coefficient $9 = 3^2$) --- confirming $k^2$, not $k(k-1)$, in both
cases.} The right-hand side of~\eqref{eq:step3-degree}, $c_k(x^k+y^k)$, has
\textbf{zero} coefficient on $x^{k-1}y$. Comparing coefficients of
$x^{k-1}y$ on both sides gives $c_k \cdot k^2 = 0$. Since $k^2 \neq 0$ for
$k \geq 2$, this forces $c_k = 0$.
\end{itemize}

Combining the three cases: $P(z) = c_1 |z|^2$.

\textbf{Step 4 (Normalization).}
Axiom~(v) applied to $P(z) = c_1 |z|^2$ gives $c_1 = 1$. Therefore
$P(z) = |z|^2$, as claimed.
\end{proof}

\begin{remark}[Role of the categorical structure in the proof]
Steps 1--4 above proceed entirely from the functional properties (ii)--(v)
of $P$ once axiom~(iv) is available at every $(r_1,r_2) \in \Rpp \times
\Rpp$; no further reference is made to paths, mutual exclusivity, or the
category $(C,w)$ once that availability is established. The categorical
machinery of Sections~2--3 --- quiver-generated categories, path
amplitudes, mutual exclusivity, and amplitude richness --- enters the
theorem only through Definition~\ref{def:amp-rich} and
Remark~\ref{rem:amp-rich-exists}, which license the universal quantifier
in axiom~(iv). This is intentional, not a gap: the categorical setting is
what gives axiom~(iv) its physical content as a statement about
mutually exclusive histories, even though the algebraic proof, once
that content is assumed, does not need to look back at where it came from.
\end{remark}

\begin{remark}\label{rem:non-neg}
Axiom~(i) was not invoked in Steps 1--4. Its role is to exclude sign-flipped
solutions in the hypothetical where Definition~2.6's codomain restriction
to $\Rnn$ is relaxed alongside axiom~(i) --- only if both are relaxed
simultaneously does the axiom
system (ii)--(iv) admit $P(z) = -|z|^2$
once~(v) is relaxed to $P(1) = \pm 1$; retaining either one alone
(the codomain restriction, or axiom~(i) itself) already excludes this
solution. Under Definition~2.6's stated
codomain and~(v) as stated, (i) is
automatically satisfied by the conclusion $P(z) = |z|^2$, and is therefore
redundant given both. I retain~(i) as the defining feature of a probability
functional, not as a separate restrictive hypothesis.
\end{remark}

\begin{remark}
The polynomiality axiom~(ii) is a regularity condition with function
analogous to --- though structurally distinct from --- $\sigma$-additivity in
Gleason's theorem~\cite{Gleason1957}: $\sigma$-additivity is compositional,
constraining how $P$ combines across disjoint sets, whereas polynomiality
is a form condition on $P$ as a function on $\C$. Both play the role of
constraining the admissible class of probability functionals, but they act
on different mathematical structures and are not interchangeable as
technical tools. Polynomiality is strictly weaker than assuming $P$ is
quadratic: only a polynomial of bounded total degree $d$ is required, and
the proof shows that the specific value of $d$ is immaterial --- only the
degree-$2$ component survives. Equivalently stated, polynomiality enforces
a finite-order sensitivity to phase structure: the probability functional
depends only on finitely many interference moments of the amplitude, and
any finite degree $d$ is permitted because all higher-order contributions
are eliminated by axioms~(iii) and~(iv). The polynomiality axiom extends to
real-analytic $P$ essentially for free; see~\S\ref{sec:fixed-deg}.
\end{remark}

\begin{remark}
The nontrivial content of the theorem is the joint force of axioms~(iii)
and~(iv). $U(1)$ invariance alone gives $P$ as a polynomial in $|z|^2$ of
unbounded admissible form; classical-limit additivity on a single fixed
pair of amplitudes is too weak to nail down the coefficients. Applied
together, and required to hold for all admissible pairs, they uniquely
select $|z|^2$. Axiom~(iii) is not merely convenient here but strictly
necessary: $P(z) := |z|^2 - \mathrm{Re}(z)\mathrm{Im}(z)$ satisfies
axioms~(i), (ii), (iv), and~(v) exactly --- non-negativity holds since
$a^2 - ab + b^2 = (a - b/2)^2 + \tfrac{3}{4}b^2 \geq 0$ for $z = a + bi$;
normalization gives $P(1) = 1$; and the classical-limit average
$\mathbb{E}_\theta[P(A_1 + e^{i\theta}A_2)]$ reduces, for real positive
$A_1 = r_1, A_2 = r_2$, to $r_1^2 + r_2^2 = P(r_1) + P(r_2)$ exactly,
since the $\cos\theta$ and $\cos\theta\sin\theta$ cross-terms both vanish
under uniform phase averaging --- but $P(e^{i\alpha}z) \neq P(z)$ for
generic $\alpha$, so $P \neq |z|^2$ despite satisfying every other
axiom. Axiom~(iii) is therefore doing genuine, independent work, not
merely narrowing the admissible form left open by axiom~(iv) alone.
\end{remark}

\addtocounter{theorem}{-1}
\renewcommand{\thetheorem}{\thesection.\arabic{theorem}$'$}
\makeatletter
\renewcommand{\theHtheorem}{\thesection.\arabic{theorem}.prime}
\makeatother
\begin{remark}[Classical-limit randomization]
The uniform distribution of the relative phase $\theta$ in axiom~(iv)
formalizes the physical situation in which paths with mutually exclusive
morphism decompositions accumulate phases that are effectively
incommensurate --- the generic situation in complex-weighted categories
with many contributing paths. In a specific model, $\theta$ is determined
by the morphism weights; axiom~(iv) requires the probability functional to
behave as if $\theta$ were uniform. Full uniformity is a stronger
condition than strictly necessary for any single fixed-degree $P$ ---
given axiom~(ii)'s bound on $P$'s degree, only finitely many Fourier
moments of the phase distribution actually need to vanish for Step~3's
argument to go through --- but it is the weakest condition that is
degree-independent, i.e., sufficient to eliminate cross-terms for every
admissible degree $d$ simultaneously, without needing to know $d$ in
advance. Axiom~(iv) is
a structural postulate on $P$ itself; whether $P$ satisfies it has no
logical dependence on the phase structure of any particular category's
weight function $w$. In the categorical-refinement framework, the
refinement flow induces effective phase decorrelation for paths lacking
common interior factors; axiom~(iv) encodes this structural feature at the
local level. The axiom is to be read as a functional requirement on $P$:
it specifies the behavior of $P$ under uniform phase averaging regardless
of whether $\theta$ is physically stochastic in any particular $(C, w)$.
Specific categories may produce deterministic relative phases between
mutually exclusive paths; the axiom requires $P$ to satisfy the averaging
identity uniformly in $\theta$, which is the necessary and sufficient
condition for the uniqueness argument.\footnote{In the
categorical-refinement companion framework (T. Ustun, ``Categorical
Refinement Dynamics,'' Zenodo, DOI:
\href{https://doi.org/10.5281/zenodo.22069114}{10.5281/zenodo.22069114}),
morphism phases are identified with $\theta(f) \propto s(f)/\kappa$ where
$s(f) = \log[W(f^{-1})/W(f)]$ is the single-morphism directional asymmetry
of the structural transition and $\kappa$ is a dimensionless normalization
constant (identified with $\hbar$ under post hoc dimensional calibration;
see the main paper \S III.B and L2). For paths with mutually exclusive
morphism decompositions, these phases are generically incommensurate,
which is the physical motivation for expecting axiom~(iv) to hold in this
framework. This is a physical expectation about the model, not a
mathematical entailment: axiom~(iv) is a condition on $P$ alone, and
whether a specific $P$ satisfies it is independent of $w$'s phase
structure in any given category. The connection runs the other way ---
incommensurate phases in $(C,w)$ are offered as a plausibility argument
for why a physically realized $P$ might satisfy axiom~(iv), not as a
sufficient mathematical condition on $(C,w)$ that would force it.}
\end{remark}
\renewcommand{\thetheorem}{\thesection.\arabic{theorem}}
\makeatletter
\renewcommand{\theHtheorem}{\thesection.\arabic{theorem}}
\makeatother

\begin{remark}[What amplitude richness requires]
The theorem's hypothesis, made precise in Definition~\ref{def:amp-rich}, is
that there exists at least one complex-weighted quiver-generated category
$(C, w)$ amplitude-rich at some $(A,B)$. This is not merely assumed to be
satisfiable: Remark~\ref{rem:amp-rich-exists} exhibits an explicit
construction (an uncountable parallel-edge quiver) realizing it, so the
theorem's hypothesis is non-vacuous. The theorem applies whenever the
underlying category is amplitude-rich; it does not apply to degenerate
categories whose hom-sets are too small to exhibit classical-limit
additivity as a nontrivial constraint. Amplitude richness is a condition
on the category $(C, w)$, not on the functional $P$. Verifying it for a
specific application category --- for example, the $\Cps$ of the companion
applied paper --- is part of the work of that application, not of
Theorem~\ref{thm:born}.
\end{remark}

\section{Relation to Existing Reconstructions}\label{sec:relations}

\subsection{Gleason's theorem}

Gleason's theorem~\cite{Gleason1957} assumes probability measures on the
lattice of projections in a Hilbert space of dimension at least three,
$\sigma$-additivity over orthogonal projections, and normalization. It
concludes that every such measure has the form $P \mapsto \mathrm{Tr}(\rho P)$
for some density operator $\rho$. Auff\`eves and
Grangier~\cite{AuffevesGrangier2022} have recently shown that the
unitary-transformation hypothesis in Gleason's setup can be removed using
Uhlhorn's theorem, strengthening the structural foundation of the
Hilbert-space derivation. The present theorem operates in a
differently-structured setting than either: there is no Hilbert space, no
lattice of projections, and no $\sigma$-additivity. The replacements are
(a) complex morphism weights on a quiver-generated category, (b) path
amplitudes as products of weights, and (c) classical-limit additivity
over mutually exclusive paths.
Correspondingly, the conclusion is weaker: only that the probability
functional on a single hom-set equals $|\cdot|^2$, not that every
compatible probability assignment arises from a density operator on some
Hilbert space.

\subsection{Hardy's reconstruction}

Hardy's five-axiom derivation~\cite{Hardy2001} reconstructs
finite-dimensional quantum theory as the unique theory, beyond classical
probability, satisfying a continuity requirement on the space of pure
states between discrete Hilbert-space dimensions. The framework proceeds at
the level of operational states, measurements, and probabilities on
finite-dimensional systems. The present setting does not address
dimensionality or the state space itself; it operates at the level of
individual amplitudes. The five axioms of Theorem~\ref{thm:born} therefore
select $|\cdot|^2$ as the probability functional once a complex amplitude
structure is in place, but do not derive the amplitude structure, the state
space, or the operational model.

\subsection{Chiribella--D'Ariano--Perinotti}

The informational reconstruction of~\cite{Chiribella2011} derives
finite-dimensional quantum theory from six operational principles:
causality, perfect distinguishability, ideal compression, local
distinguishability, pure conditioning, and purification. The derivation
proceeds within the framework of operational probabilistic theories.
Theorem~\ref{thm:born} may be viewed as a local, minimal-axiom version of
the step of~\cite{Chiribella2011} that selects $|\cdot|^2$ as the
probability functional on complex amplitudes, rather than a reconstruction
of the full operational theory. A conceptual difference is that my
axiom~(iv) imposes additivity only in the classical-limit where the
relative phase is randomized, rather than requiring additivity at the level
of operational outcomes.

\subsection{Masanes--Galley--M\"uller and Goyal--Knuth--Skilling}

Two further reconstructions bear directly on the
present result and are engaged with here for completeness. Masanes,
Galley, and M\"uller~\cite{MasanesGalleyMuller2019} show that the
measurement postulates of quantum mechanics --- including the Born
rule and the post-measurement state-update rule --- are derivable from
the remaining postulates of unitary quantum mechanics together with an
operational, subsystem-partitioning argument and a finite-parameter
assumption on finite-dimensional Hilbert spaces. This result has been
actively contested: Kent~\cite{Kent2025} gives explicit examples of
non-quantum measurement and state-update rules he argues satisfy all
of~\cite{MasanesGalleyMuller2019}'s stated assumptions, to which
Masanes, Galley, and M\"uller have published a
response~\cite{MasanesGalleyMuller2025Response} disputing that the
counterexamples satisfy those assumptions as stated; see
also~\cite{Stacey2024} for further discussion of the exchange. This
dispute is internal to~\cite{MasanesGalleyMuller2019}'s own
operational framework and does not bear on the comparison drawn here,
which concerns only the difference in setting and target between that
framework and Theorem~\ref{thm:born}. This is a
substantially different undertaking from Theorem~\ref{thm:born}: it
starts from, rather than dispenses with, a Hilbert-space and
unitary-evolution framework, and its target is the full measurement
apparatus (outcome structure and state update, not merely the
probability rule) rather than a single hom-set's probability
functional in a bare category. The two results are complementary
rather than competing: \cite{MasanesGalleyMuller2019} derives
measurement structure \emph{within} quantum theory from operational
consistency; Theorem~\ref{thm:born} derives the probability rule
\emph{without presupposing} quantum theory's Hilbert-space
scaffolding at all.

Goyal, Knuth, and Skilling~\cite{GoyalKnuthSkilling2010} address a
logically prior question: why quantum amplitudes should be
complex-valued in the first place, rather than assuming complex
amplitudes and deriving the probability rule from them. Starting from
real-number pairs associated with measurement-outcome sequences and
elementary symmetry conditions, they derive that these pairs must
combine according to complex arithmetic, recovering Feynman's sum and
product rules with the modulus-squared giving outcome probability.
Theorem~\ref{thm:born} takes complex-valued morphism weights as
given (Definition~\ref{def:paths-amplitudes}) and derives only the
probability functional's specific form; \cite{GoyalKnuthSkilling2010}
derives the complex structure itself. The two results address
different, complementary stages of the same reconstruction problem.

\subsection{Categorical quantum mechanics}

Abramsky and Coecke~\cite{AbramskyCoecke2004,AbramskyCoecke2008} formulate
quantum mechanics in dagger-compact closed categories, in which
probabilistic structure is captured via compact-closed duality and the
dagger involution --- a categorical reformulation of the existing
quantum-probabilistic apparatus rather than a derivation of the Born
rule from more primitive axioms. Theorem~\ref{thm:born} operates in a differently-structured
setting --- a quiver-generated category equipped with a complex morphism
weight functional, requiring no dagger and no compact-closed monoidal
structure at all --- and is
correspondingly weaker: it characterizes the probability functional at the
level of individual hom-sets without giving the full
monoidal-probabilistic structure that compact-closed dagger categories
carry.

Yang and Fullwood~\cite{YangFullwood2026} have recently shown that density
operators, POVMs, and the Born rule can all be encoded simultaneously in
the categorical notion of a natural transformation between canonical
measurement and probability functors associated with a fixed quantum
system. Their result establishes an explicit bijection between density
operators on a Hilbert space and natural transformations of these functors,
formalizing how quantum effects and their associated probabilities are
additive with respect to a coarse-graining of measurements. The
Yang--Fullwood result and Theorem~\ref{thm:born} are complementary in a
precise sense: their bijection takes Hilbert-space structure as input and
characterizes how the Born rule globally organizes density operators and
POVMs functorially; Theorem~\ref{thm:born} takes only complex morphism
weights on a quiver-generated category as input and derives the $|\cdot|^2$ functional
locally on a single hom-set without assuming Hilbert structure. The two
results address different layers of the categorical Born-rule problem ---
Yang--Fullwood the functorial coherence given Hilbert space,
Theorem~\ref{thm:born} the local probability functional without Hilbert
space --- and the global coherence problem of \S\ref{sec:global} may be
viewed as asking whether the local Theorem~\ref{thm:born} functional admits
a Hilbert-free analogue of the Yang--Fullwood functorial structure.

\subsection{Recent related results}

A recent refinement-based derivation by Lela~\cite{Lela2026} establishes
that the Born rule is the unique non-negative refinement-stable induced
weight on robust record sectors within an admissible Hilbert record layer,
under explicit structural conditions on binary refinement profiles. The
theorem target and additive carrier of~\cite{Lela2026} differ from the
present work: Lela places additivity on disjoint admissible continuation
bundles, while Theorem~\ref{thm:born} places it on mutually exclusive paths
in a bare quiver-generated category. Lela requires a Hilbert record layer; the present
setting has no ambient Hilbert space. The two results are structurally
complementary.

Axelsson~\cite{Axelsson2026} derives the Born rule from the compatibility
of reversible linear evolution with multiplicative composition of weights
under irreversible record refinement. The argument uses additive
composition at the amplitude level and multiplicative composition at the
weight level, and concludes uniqueness through a functional equation for
the weight function. The present theorem shares the feature that
uniqueness follows from a compatibility condition between two composition
structures (additive in amplitudes, multiplicative in morphism weights)
but differs in setting:~\cite{Axelsson2026} operates on
reversible/irreversible regimes of physical processes;
Theorem~\ref{thm:born} operates on morphism composition in a quiver-generated
category with no such regime distinction.

Zaghi~\cite{Zaghi2025} derives the Born rule from contextual
relative-entropy minimization in the categorical setting of dagger-compact
categories with special commutative Frobenius algebras encoding classical
structure. The Born weights arise as local minimizers of Umegaki relative
entropy under Petz's Pythagorean identity. Theorem~\ref{thm:born} operates
in a differently-structured categorical setting: a quiver-generated
category requiring no dagger, no monoidal structure, and no Frobenius
algebras, in exchange for freeness (no relations among generating
morphisms) that Zaghi's setting does not itself require --- and uses
classical-limit phase averaging rather than relative-entropy minimization.

A structural precursor is the consistency-of-amplitudes approach of
Caticha~\cite{Caticha1998}, which derives quantum rules by imposing
functional equations expressing consistency of two-path amplitude
composition. Caticha's approach and Theorem~\ref{thm:born} share the idea
that the Born rule is the unique probability assignment consistent with a
classical-limit behavior of the amplitude sum; they differ in that Caticha
works with functional equations on amplitude composition directly, while
the present theorem works with categorical path structure and $U(1)$
invariance. The categorical formulation makes the mutual exclusivity
condition precise at the structural level of morphism factorization.

Agrawal and Wilson~\cite{AgrawalWilson2025} have recently derived the
generalised Born rule from first principles within a process-theoretic
setting: starting from any process theory equipped with states, effects,
and a probability function satisfying basic compatibility axioms, they
show that the theory is equivalent to one in which the generalised Born
rule holds, with the strength of the scalar-probability identification
depending on whether noise is admitted (monoid homomorphism vs.\ semiring
isomorphism). Their derivation operates at the level of process
composition in symmetric monoidal categories --- a different categorical
setting from the quiver-generated-category morphism-weight structure of
Theorem~\ref{thm:born} --- but shares the strategy of deriving the
probability rule from compatibility axioms rather than postulating it.
Theorem~\ref{thm:born} may be viewed as the analogue
of~\cite{AgrawalWilson2025} for the local hom-set probability functional in
a setting without monoidal product structure, where the $U(1)$ phase
invariance plays a role analogous to the scalar-probability compatibility
axioms of~\cite{AgrawalWilson2025}.

Zhang~\cite{Zhang2026} takes a complementary, negative-result approach:
rather than deriving the Born rule from a specific set of assumptions,
Zhang proves that an additivity-type postulate cannot itself be derived
from non-contextuality and normalization alone, and traces the
consequences of this gap through five existing derivations (Gleason's
theorem, Busch's POVM extension, the Deutsch--Wallace theorem, Zurek's
envariance proof, and the Finkelstein--Hartle theorem), each of which is
shown to depend on an additivity assumption not itself derived from more
primitive principles. This is directly relevant to how axiom~(iv) above
should be read: Zhang's result supports treating classical-limit
additivity as a genuinely independent, load-bearing structural axiom
in Theorem~\ref{thm:born} --- not a redundant consequence of the other
four --- rather than as a gap specific to the present categorical
setting.

Torres Alegre~\cite{TorresAlegre2025,TorresAlegre2026} derives the Born
rule within finite-dimensional generalized probabilistic theories (GPTs)
from relativistic causal consistency: given a GPT admitting purification
(and hence steering), the only state-to-probability map consistent with
no-signaling is shown to be the identity on the geometric transition
probability, yielding $|\langle\phi|\psi\rangle|^2$ combined with
standard reconstruction results. This is an operational derivation with
no categorical structure and no morphism-weight formalism; the
no-signaling requirement plays a role structurally analogous to the
present setting's classical-limit additivity, in that both are
physically-motivated consistency conditions imposed on the probability
assignment rather than derived from more primitive postulates.

\section{The Global Coherence Problem}\label{sec:global}

Theorem~\ref{thm:born} establishes local uniqueness: for any single amplitude
$A(A \to B)$ in a complex-weighted quiver-generated category, the probability assignment
is $|A(A \to B)|^2$. The local theorem leaves open the question of whether
the family of local probability functionals, one for each hom-set, is
coherent with morphism composition in the underlying category.

\begin{problem}[Global coherence]\label{prob:coherence}
Let $(C, w)$ be a complex-weighted quiver-generated category. Suppose that for every
ordered pair of objects $A, B \in \ob(C)$ at which $(C, w)$ is
amplitude-rich (Definition~\ref{def:amp-rich}), a probability functional
$P_{A, B} : \C \to \Rnn$ is given, with each $P_{A, B}$ satisfying
axioms~(i)--(v) of Theorem~\ref{thm:born} relative to the hom-set
$\Gamma(A, B)$. By Theorem~\ref{thm:born}, $P_{A, B}(z) = |z|^2$ for every
such $(A, B)$. Does this family $\{P_{A, B}\}_{A, B \in \ob(C)}$ extend to a
functorial probability assignment compatible with composition of morphisms
in $C$?
\end{problem}

``Compatible with composition'' may be made precise by requiring a
Chapman--Kolmogorov-type identity. Informally: for a composition
$A \to B \to C$ realized through intermediate objects $B$, the total
probability of transition $A \to C$ should factor through the probabilities
$A \to B$ and $B \to C$ in a manner consistent with path-amplitude
composition. The natural categorical setting for making this precise is the
theory of Markov categories~\cite{ChoJacobs2019,Fritz2020,Perrone2024},%
\footnote{In Fritz's terminology~\cite[Definition~2.1]{Fritz2020}, a Markov
category is a symmetric monoidal category $\mathcal{M}$ in which every
object $X$ carries a cocommutative comonoid structure (copy morphism
$X \to X \otimes X$) satisfying a compatibility condition with the
monoidal product. Problem~\ref{prob:coherence} asks whether the
complex-weighted category $(C, w)$ admits a functor into a Markov category
that intertwines path-amplitude composition with Markov kernel composition
--- equivalently, whether the Chapman--Kolmogorov identity holds for the
assignments $P_{A,B}(z) = |z|^2$. The `copy' morphism would encode the
splitting of a path $A \to C$ through an intermediate $B$; its existence
and coherence are the nontrivial content of the global extension.}
in which probability kernels between objects are themselves morphisms in a
symmetric monoidal category and composition of kernels models the
Chapman--Kolmogorov identity. The global coherence question becomes: when
does a complex-weighted quiver-generated category $(C, w)$ induce, via the assignment
$P_{A, B}(z) = |z|^2$, a Markov subcategory $\mathcal{M}(C)$ whose
composition of kernels is compatible with $C$'s morphism composition?

The answer is not affirmative for all $(C, w)$. The Chapman--Kolmogorov
identity fails in general for quantum amplitudes: the cross-terms that
encode interference between paths through distinct intermediate objects $B$
obstruct the factorization
$|A(A \to C)|^2 = \sum_B |A(A \to B)|^2\, |A(B \to C)|^2$. This is a
structural obstruction, not a regularity gap. I therefore reformulate the
question as a classification problem.

\addtocounter{theorem}{-1}
\renewcommand{\thetheorem}{\thesection.\arabic{theorem}$'$}
\makeatletter
\renewcommand{\theHtheorem}{\thesection.\arabic{theorem}.prime}
\makeatother
\begin{problem}[Coherence classification]\label{prob:coherence-class}
Characterize the class $\Ccoh$ of complex-weighted quiver-generated categories $(C, w)$
for which the family $\{P_{A, B}(z) = |z|^2\}$ extends to a functorial
probability assignment satisfying the Chapman--Kolmogorov identity under
morphism composition. Categories in $\Ccoh$ are those whose weight
structure allows the interference cross-terms between paths through
distinct intermediate objects to vanish coherently. An earlier version
of this problem statement suggested this subclass is ``expected to
include categories whose weights factor through a commutative
monoid''; this was checked directly and does not hold under any
natural formalization: the codomain $(\C, \cdot)$ is already
commutative, so every weight assignment trivially satisfies this
reading, making the clause vacuous rather than distinguishing;
restricting to positive real weights, a more specific commutative
reading, does not help either, since the cross-term
$2\mathrm{Re}(a_1\overline{a_2}) = 2a_1a_2$ is then strictly positive
whenever $a_1, a_2 \neq 0$, so Chapman--Kolmogorov still fails. No
sufficient structural condition on the weight function is currently
known; a trivial member of $\Ccoh$ is any category where every
hom-set $\Gamma(A,C)$ has at most one path through any given
intermediate object (so the interference sum has no cross-terms to
begin with), but this excludes the generic, interference-carrying
case the problem is actually about. Identifying
necessary and sufficient structural conditions for membership in $\Ccoh$,
and connecting these conditions to the synthetic probability theory of
Markov categories~\cite{ChoJacobs2019,Fritz2020,Perrone2024}, is an open
problem.
\end{problem}
\renewcommand{\thetheorem}{\thesection.\arabic{theorem}}
\makeatletter
\renewcommand{\theHtheorem}{\thesection.\arabic{theorem}}
\makeatother

\begin{example}[Chapman--Kolmogorov failure on a four-object category]\label{ex:cop-fail}
To establish that Problem~\ref{prob:coherence-class} is a structural
obstruction rather than a regularity gap, I exhibit a minimal
quiver-generated category in which Chapman--Kolmogorov manifestly fails.
Let $Q$ have objects $\{A, B_1, B_2, D\}$ and exactly four generating
edges: $f_i: A \to B_i$ and $g_i: B_i \to D$ for $i = 1, 2$, with no edge
$B_1 \to B_2$ or $B_2 \to B_1$ and no cycles. Let $C := \mathrm{Free}(Q)$.
Because $C$ is free on exactly these four generators, the only composable
words from $A$ to $D$ are $(f_1, g_1)$ and $(f_2, g_2)$ --- no third path
exists, since freeness rules out any composite morphism arising other than
as the unique word of generators that produces it. So $\Gamma(A,D)$ has
exactly two elements, rigorously rather than merely by inspection. Assign
weights $w_0(f_1) = w_0(g_1) = 1$ and $w_0(f_2) = w_0(g_2) = i$ (the
imaginary unit).
Then the intermediate amplitudes are
\[
A(A \to B_1) = 1,
\quad A(B_1 \to D) = 1,
\quad A(A \to B_2) = i,
\quad A(B_2 \to D) = i,
\]
and the total amplitude from $A$ to $D$ is
\[
A(A \to D) = a(f_1; g_1) + a(f_2; g_2)
= 1 \cdot 1 + i \cdot i = 1 - 1 = 0.
\]
Therefore $P_{A,D}(A(A \to D)) = |0|^2 = 0$. The putative
Chapman--Kolmogorov factorization, by contrast, yields
\[
\sum_{B \in \{B_1, B_2\}} |A(A \to B)|^2 \cdot |A(B \to D)|^2
= 1 \cdot 1 + 1 \cdot 1 = 2,
\]
which differs from the true value~$0$. The discrepancy is the interference
cross-term $2\,\mathrm{Re}[\,a(f_1; g_1) \cdot (a(f_2; g_2))^*\,]
= 2\,\mathrm{Re}[1 \cdot (-1)] = -2$ that Chapman--Kolmogorov composition
discards but the true $|\cdot|^2$ retains. The obstruction is structural:
for generic relative phases between the two intermediate paths --- more
precisely, whenever $\mathrm{Re}[\,a(f_1;g_1) \cdot (a(f_2;g_2))^*\,] \neq
0$ --- this produces a nonzero cross-term, so the category is outside
$\Ccoh$ for that choice of weights. (A nontrivial relative phase alone
does not suffice: at relative phase exactly $\pi/2$, for instance, the
cross-term vanishes despite the phase being nontrivial, so the correct
condition is on the real part of the interference term, not on the phase
being nonzero per se.) A category $(C, w)$ lies in $\Ccoh$ only if interference
cross-terms through every intermediate object vanish identically --- a
strong structural condition, not a regularity one.
\end{example}

\section{Discussion}\label{sec:discussion}

\subsection{Application to categorical-refinement frameworks}

The motivating application of Theorem~\ref{thm:born} is to
categorical-refinement approaches to quantum mechanics in which physical
configurations are objects of a quiver-generated category, physical transformations
are morphisms, and quantum structure emerges from complex-valued morphism
weights rather than from prior Hilbert-space structure (T. Ustun,
``Categorical Refinement Dynamics and a Consistency Relation Among
Standard Model Flavor Parameters,'' Zenodo, DOI:
\href{https://doi.org/10.5281/zenodo.22069114}{10.5281/zenodo.22069114}).
In such frameworks, the Born rule has previously been invoked as an
informal consistency requirement. Theorem~\ref{thm:born} elevates this to
a proved local result under five explicitly stated axioms, separating what
can be settled locally from what requires a functorial probabilistic
structure.

\subsection{Scope and limitations}

The theorem characterizes the probability functional $P$ only locally, on a
single hom-set. It does not address: (a) the global coherence problem of
Section~\ref{sec:global}; (b) whether every complex-weighted category ---
not necessarily quiver-generated, a strictly broader class than the
theorem's own hypothesis --- actually admits amplitudes satisfying all
five axioms; in particular,
whether the classical-limit randomized-phase hypothesis of axiom~(iv) is
physically realized in a given framework; (c) the derivation of specific
state spaces, measurement operators, or operational structures. Each of
these requires input beyond the local axioms of Theorem~\ref{thm:born}.

\subsection{On the ``fixed-degree functional'' language, and extension to real-analytic \texorpdfstring{$P$}{P}}\label{sec:fixed-deg}

Informal statements of Born-rule uniqueness in the categorical-amplitude
literature frequently invoke ``fixed-degree functionals of $A$ and $\bar A$''
without precise specification. Theorem~\ref{thm:born} makes the notion
precise via axiom~(ii): a polynomial of bounded total degree $d$, with $d$
any finite positive integer. The proof shows that only the degree-$2$
component survives regardless of $d$, so the specific value is immaterial
provided it is finite.

The polynomiality axiom is strictly stronger than needed. If
$P(z) = \sum_{k=0}^\infty c_k\, |z|^{2k}$ is real-analytic and
$U(1)$-invariant, with the series $\sum_k c_k w^k$ converging absolutely
for $|w| < R$ (equivalently, $P$ converges absolutely on $|z| < \sqrt{R}$),
the proof of Theorem~\ref{thm:born} carries over on a restricted domain.
\emph{This restriction is necessary, not merely technical}: absolute
convergence of $\sum_k c_k w^k$ at radius $R$ does not imply convergence
at $2R$, and $|A_1 + e^{i\theta}A_2|^2$ can reach $(r_1+r_2)^2$ as $\theta$
varies --- so the exchange of sum and integral below is licensed only when
$(r_1+r_2)^2 < R$, i.e., $r_1 + r_2 < \sqrt{R}$. Under that restriction,
uniform convergence on the resulting compact subsets justifies exchanging the
infinite sum and the uniform-phase expectation in axiom~(iv), yielding
\[
\sum_{k \geq 0} c_k\, \mathbb{E}_\theta\bigl[\,|A_1 + e^{i\theta} A_2|^{2k}\,\bigr]
= \sum_{k \geq 0} c_k\, (r_1^{2k} + r_2^{2k})
\]
for $r_1, r_2 > 0$ with $r_1 + r_2 < \sqrt{R}$. The argument of Step~3 applies
at each order via the same homogeneous-degree decomposition: writing
$x = r_1^2$, $y = r_2^2$, each $\mathbb{E}_\theta[\,|A_1 + e^{i\theta}
A_2|^{2k}\,]$ is homogeneous of degree $k$ in $(x,y)$, so equality of the
two (now infinite) graded sums still forces equality degree by degree on
the tested region $r_1 + r_2 < \sqrt{R}$: $c_0 = 0$ at degree $0$, $c_1$
unconstrained at degree $1$ (normalization fixes $c_1 = 1$), and $c_k
\cdot k^2 = 0$ hence $c_k = 0$ at every degree $k \geq 2$, exactly as in
Step~3 --- no induction across degrees is needed here either, for the same
reason: distinct degrees never share a monomial. Since $r_1 + r_2 <
\sqrt{R}$ is still a full two-parameter continuum (not merely finitely
many test points), this pins down every coefficient $c_k$ of $P$'s
defining series as a global constant. The coefficients thus determined
--- $c_1 = 1$, all others $0$ --- apply to the \emph{original} series
$P(z) = \sum_k c_k|z|^{2k}$ wherever \emph{that} series converges, which is
all of $|z| < \sqrt{R}$: the restriction $r_1+r_2<\sqrt{R}$ was needed only
to justify the sum--integral interchange used to \emph{test} the
coefficients, not to bound where the resulting formula for $P$ holds. This
gives $P(z) = |z|^2$ throughout
the full convergence domain $|z| < \sqrt{R}$. Standard analytic continuation extends this to the
natural domain of $P$ --- i.e., the maximal open subset of $\C$ on which
$P$ admits a real-analytic extension; for $P$ globally real-analytic on
$\C$ this is all of $\C$, and since the conclusion $P(z) = |z|^2$ is itself
entire, the extension is trivially global in that case. If the natural
domain is a proper subset of $\C$, the uniqueness conclusion holds on that
subset only, and global extension requires the additional hypothesis of
global analyticity.

Hence the conclusion $P(z) = |z|^2$ holds for any real-analytic,
$U(1)$-invariant, non-negative $P$ satisfying axioms~(iv) and~(v) --- the
polynomiality axiom of Theorem~\ref{thm:born} is a convenience, not a
necessity. I retain polynomiality in the main theorem statement because it
makes the proof self-contained without invoking uniform convergence
arguments. Dropping analyticity further --- permitting merely continuous or
measurable $P$ --- would genuinely require a strengthening of axiom~(iv)
or a new continuity hypothesis, and is left open.

\subsection{Comparison with Zurek's envariance}

Zurek's derivation of the Born rule from environment-assisted
invariance~\cite{Zurek2003} relies on specific symmetries of entangled
bipartite states in Hilbert space. The present theorem does not assume a
bipartite structure or entanglement; it applies to arbitrary
complex-weighted quiver-generated categories and extracts the Born rule from a more
minimal symmetry --- the global $U(1)$ rephasing of a single amplitude ---
combined with classical-limit additivity. The two approaches are
complementary: envariance establishes $|\cdot|^2$ operationally in the
Hilbert-space setting using entanglement, while the present result
establishes it structurally in the categorical setting using phase symmetry
and a classical-limit consistency requirement.

\section*{Acknowledgments}

No external funding was received for this work. The author has no competing
interests to declare.


\end{document}